\newcommand{\coNP}{$\mathbf{coNP}$}
\let\cite\citep
\begin{document}

\lefttitle{A. Charalambidis, Ch. Nomikos and P. Rondogiannis}

\jnlPage{1}{8}
\jnlDoiYr{2021}
\doival{10.1017/xxxxx}

\title[Strong Equivalence of LPODs: a Logical Perspective]
  {Strong Equivalence of Logic Programs with Ordered Disjunction: a Logical Perspective}


\begin{authgrp}
\author{\gn{Angelos} \sn{Charalambidis}}
\affiliation{Dept of Informatics and Telecommunications, National and Kapodistrian University of Athens, Greece}
\author{\gn{Christos} \sn{Nomikos}}
\affiliation{Dept of Computer Science and Engineering, University of Ioannina, Greece}
\author{\gn{Panos} \sn{Rondogiannis}}
\affiliation{Dept of Informatics and Telecommunications, National and Kapodistrian University of Athens, Greece}
\end{authgrp}

\history{\sub{xx xx xxxx;} \rev{xx xx xxxx;} \acc{xx xx xxxx}}

\maketitle

\begin{abstract}
Logic Programs with Ordered Disjunction (LPODs) extend classical logic programs
with the capability of expressing preferential disjunctions in the heads of
program rules. The initial semantics of LPODs~\cite{lpod-brewka,lpod-BNS04},
although simple and quite intuitive, is not purely model-theoretic. A
consequence of this is that certain properties of programs appear non-trivial to
formalize in purely logical terms. An example of this state of affairs is the
characterization of the notion of strong equivalence for LPODs~\cite{FaberTW08}.
Although the results of \citeN{FaberTW08} are accurately developed, they fall
short of characterizing strong equivalence of LPODs as logical equivalence in
some specific logic. This comes in sharp contrast with the well-known
characterization of strong equivalence for classical logic programs, which, as
proved by \citeN{LifschitzPV01}, coincides with logical equivalence in the logic
of here-and-there. In this paper we obtain a purely logical characterization of
strong equivalence of LPODs as logical equivalence in a four-valued logic.
Moreover, we provide a new proof of the \coNP-completeness of strong equivalence
for LPODs, which has an interest in its own right since it relies on the special
structure of such programs. Our results are based on the recent logical
semantics of LPODs introduced by \citeN{iclp2021}, a fact which we believe
indicates that this new semantics may prove to be a useful tool
in the further study of LPODs.
\ifarxiv
This work is under consideration for acceptance in TPLP.%
\fi
\end{abstract}
\begin{keywords}
Ordered Disjunction, Strong Equivalence, Logic of Here-and-There, Answer Sets.
\end{keywords}

\section{Introduction}
Logic Programs with Ordered Disjunction (LPODs)~\cite{lpod-brewka,lpod-BNS04}
extend classical logic programs with the capability of expressing preferential
disjunctions in the heads of program rules. The head of an LPOD rule is a formula
$C_1 \times \cdots \times C_n$ intuitively understood as follows: \qemph{I prefer $C_1$;
however, if $C_1$ is impossible, I can accept $C_2$; $\cdots$; if all of
$C_1,\ldots,C_{n-1}$ are impossible, I can accept $C_n$}. The meaning of LPODs
is expressed by their \emph{most-preferred answer sets}~\cite{lpod-brewka,lpod-BNS04},
namely a subset of their answer sets which satisfies in the best possible way the preferences
in the head of program rules. Due to their elegance and expressiveness, LPODs are
widely accepted as a concise and powerful formalism for preferential reasoning, both 
in logic programming and in artificial intelligence.

Although simple and quite intuitive, the original semantics of
LPODs~\cite{lpod-brewka,lpod-BNS04} is not purely model-theoretic. More
specifically, the most-preferred answer sets of a program can not be determined by
just examining the set of models of the program. Instead, one has to
additionally use an ordering relation which relies on the syntax of the source
program. There have been reported in the literature~\cite{lpod-crprolog,lpod-BNS04,iclp2021}
cases where the original semantics of LPODs produces counterintuitive results.
Another consequence of this semantics, is that certain
properties of LPODs appear non-trivial to formalize in purely logical terms.
In this paper we identify one such case, namely the problem of characterizing
the notion of \emph{strong equivalence} for LPODs.

The concept of strong equivalence for logic programs was introduced
by \citeN{LifschitzPV01}, and has proven to be an essential and extensively
studied property in ASP. Two logic programs $P_1$ and $P_2$ are termed strongly
equivalent under a given semantics if for every logic program $P$, $P_1 \cup P$
has the same meaning as $P_2 \cup P$ under this given semantics. Obviously, when
two logic programs are strongly equivalent, we can replace one for the other
inside a bigger program without any change in the observable behavior of this
program. \citeN{LifschitzPV01} demonstrated that two programs are
strongly equivalent under the answer set semantics~\cite{GL88} if and only if
they are equivalent in the logic of here-and-there~\cite{Pearce96,Pearce99}.
The importance of this result stems from the fact that it relates the observable
behaviour of programs with a purely logical notion, namely that of logical equivalence.

Due to the significance of strong equivalence, it appears as a natural
endeavor to study this concept for various extensions of logic programs.
Shortly after the inception of LPODs, an exhaustive study of various notions
of strong equivalence for LPODs was undertaken by \citeN{FaberTW08}. Although
the results of \citeN{FaberTW08} are accurately developed, they fall short of
characterizing strong equivalence of LPODs as logical equivalence in some
specific logic. This comes in sharp contrast with the aforementioned
characterization of strong equivalence for classical logic programs as logical
equivalence in the logic of here-and-there. We believe that this is not an
inherent shortcoming of the work of \citeN{FaberTW08}, but instead a possibly
unavoidable consequence of the fact that the original semantics of
LPODs is not purely model theoretic.

Recently, a purely model-theoretic semantics for LPODs was developed
by \citeN{iclp2021}, who undertook a question initially posed
by \citeN{lpod-cabalar}. More specifically, as it is demonstrated
by \citeN{iclp2021}, the most-preferred answer sets of an LPOD can be obtained as the
least models of the program under a novel four-valued logic, using an ordering
relation that is independent of the syntax of the program. It is also
demonstrated that the shortcomings of LPODs that have been observed in the
literature~\cite{lpod-crprolog,lpod-BNS04,iclp2021}, are remedied by resorting
to this new approach, and it is claimed that this new
semantics may prove helpful in formalizing, in purely logical terms, properties
and transformations of LPODs. It is therefore natural to wonder if this new
semantic characterization leads to a purely logical definition of strong
equivalence for LPODs. The present paper investigates exactly this question. More
specifically, the main contributions of the present paper are as follows:
\begin{itemize}
\item Following the work of~\citeN{FaberTW08}, we consider two alternative definitions
      of strong equivalence for LPODs, which can be supported under the
      model-theoretic framework developed by~\citeN{iclp2021}. We demonstrate
      that both of them coincide with the notion of logical equivalence of
      programs in the four-valued logic of~\citeN{iclp2021}. Our characterization
      gracefully extends the results of~\citeN{LifschitzPV01} for normal logic
      programs.

\item We provide a new proof of the \coNP-completeness of strong equivalence for
      LPODs, which has an interest in its own right, since it relies on the
      special structure of such programs. More specifically, the proof demonstrates
      \coNP-hardness by a direct (and quite simple) reduction from 3SAT, without
      resorting to the well-known (and more involved) \coNP-hardness result of~\citeN{Lin02}
      for strong equivalence of normal logic programs.
\end{itemize}
The rest of the paper is organized as follows. Section~\ref{background} provides
the mathematical preliminaries that will be needed throughout the paper.
Section~\ref{characterization} presents the characterization results for strong
equivalence of LPODs. In Section~\ref{complexity} the \coNP-completeness of
strong equivalence for LPODs is established. Section~\ref{related} discusses
related work and gives pointers for future work. The proofs of certain
results have been moved to an appendix.

\section{Background}\label{background}
In this section we present the necessary background that will be used throughout
the paper.
We start by defining the syntax and the semantics of the four-valued logic introduced
by \citeN{iclp2021}, and discuss how this logic can be used to
redefine the semantics of LPODs.

Similarly to the paper by \citeN{FaberTW08}, we do not consider strong negation, for reasons
of simplicity.
\begin{definition}\label{sigma-def}
Let $\Sigma$ be a nonempty, countably infinite, set of propositional atoms. The set
of \emph{well-formed formulas} is inductively defined as follows:
\begin{itemize}
\item Every element of $\Sigma$ is a well-formed formula,

\item If $\phi_1$ and $\phi_2$ are well-formed formulas, then
      $(\phi_1 \wedge \phi_2)$, $(\phi_1 \vee \phi_2)$,
      $(\pnot \phi_1)$, $(\phi_1 \leftarrow \phi_2)$,
      and $(\phi_1 \times \phi_2)$, are well-formed formulas.
\end{itemize}
\end{definition}

We will use capital variables, like $A$, $B$, $C$, $D$, and their subscripted versions,
to denote atoms; we will use $L$, and its subscripted versions, to denote literals
(namely, atoms or negated atoms).

In order to define the semantics of well-formed formulas, we use the set
$V = \{F, F^*, T^*, T\}$ of truth values, which are ordered as follows:
\[
          F < F^* < T^* < T
\]
\pagebreak[4]
\begin{definition}\label{interpretation-and-semantics}
An \emph{interpretation} $I$ is a function from $\Sigma$ to $V$.
We can extend $I$ to apply to formulas, as follows:
\[
\begin{array}{lll}
    I(\pnot\phi)  & = &
      \begin{cases}
          T & \mbox{if $I(\phi)\leq F^*$}\\
          F & \mbox{otherwise}
      \end{cases} \\
    I(\phi \leftarrow \psi)  & = &
      \begin{cases}
          T & \mbox{if $I(\phi) \geq I(\psi)$}\\
          F & \mbox{otherwise}
      \end{cases} \\
I(\phi_1 \wedge \phi_2) & = & \min\{I(\phi_1),I(\phi_2)\}\\
I(\phi_1 \vee   \phi_2) & = & \max\{I(\phi_1),I(\phi_2)\}\\
I(\phi_1 \times \phi_2) & = &
      \begin{cases}
          I(\phi_2) & \mbox{if $I(\phi_1) = F^*$}\\
          I(\phi_1) & \mbox{otherwise}
      \end{cases} \\
\end{array}
\]
\end{definition}
It is straightforward to see that the meanings of ``$\vee$'', ``$\wedge$'', and
``$\times$'' are associative and therefore we can write $I(\phi_1 \vee \cdots
\vee \phi_n)$, $I(\phi_1 \wedge \cdots \wedge \phi_n)$, and $I(\phi_1 \times
\cdots \times \phi_n)$ unambiguously (without the need of extra parentheses).
Moreover, given literals $L_1,\ldots,L_n$, we will often write
$L_1,\ldots,L_n$ instead of $L_1\wedge \cdots \wedge L_n$.

LPODs are sets of formulas of a special kind, specified by the following definition.
\begin{definition}\label{lpod}
An LPOD is a finite set of rules of the form:
\[
  C_1 \times \cdots \times C_n \leftarrow A_1,\ldots,A_m,{\pnot B_1},\ldots,{\pnot B_k}
\]
where $n\geq 1$, $m,k\geq 0$, and the $C_i, A_j$, and $B_l$ are atoms.
\end{definition}
We will use capital letters like $P$, $Q$, and their subscripted versions, to denote LPODs.
\begin{definition}\label{model}
An interpretation $I$ is a \emph{model} of an LPOD $P$ if every rule of $P$
evaluates to $T$ under $I$. Two LPODs are termed \emph{logically equivalent}
if they have the same models.
\end{definition}

\citeN{iclp2021} defined the semantics of LPODs, namely the precise characterization
of their most-preferred answer sets, based on the above four-valued
logic. More specifically, the most-preferred answer sets of an LPOD are generated
using a two-step procedure. In the first step, a subset of the models of the program
is selected using a minimization procedure according to an ordering relation
$\preceq$ defined below. These models are called \emph{answer sets}
of the given LPOD, because they can also be produced using a reduct-based approach
similar to the one defined in the paper by \citeN{lpod-BNS04}. In the second step, a subset of the answer sets
is selected using a minimization procedure that examines the set of atoms that have
the value $F^*$ in each answer set. These two steps are formally defined below.
\begin{definition}\label{four-valued-preceq}
The ordering $\prec$ on truth values is defined as follows: ${F \prec F^*}$, ${F \prec T^*}$,
${F \prec T}$, and ${T^* \prec T}$. Given two truth values $v_1,v_2$, we write
${v_1 \preceq v_2}$ if either ${v_1 \prec v_2}$ or ${v_1 = v_2}$. Given
interpretations $I_1,I_2$ of a program $P$, we write $I_1 \preceq I_2$
if for all atoms $A$ in $P$, $I_1(A) \preceq I_2(A)$. We write $I_1 \prec I_2$
if $I_1 \preceq I_2$ but $I_1 \neq I_2$.
\end{definition}
It is easy to verify that $\preceq$ is a partial order.
\begin{definition}
An interpretation $I$ of LPOD $P$ is called \emph{solid} if for all atoms
$A$ in $P$, it is $I(A) \neq T^*$.
\end{definition}
\begin{definition}\label{logical-characterization-theorem}\label{answer-set}
An interpretation $M$ of an LPOD $P$ will be called an \emph{answer set} of $P$
if $M$ is a \mbox{$\preceq$-minimal} model of $P$ and $M$ is solid.
\end{definition}

\begin{definition}\label{sqsubseteq-ordering}
Let $P$ be an LPOD and let $M_1,M_2$ be answer sets of $P$. Let $M_1^*$ and
$M_2^*$ be the sets of atoms in $M_1$ and $M_2$ respectively that have the
value $F^*$. We say that $M_1$ is preferred to $M_2$, written
$M_1 \sqsubset M_2$, if $M_1^* \subset M_2^*$.
\end{definition}
\begin{definition}\label{most-preferred}
An answer set of an LPOD $P$ is called \emph{most-preferred} if it is minimal
among all the answer sets of $P$ with respect to the $\sqsubset$ relation.
\end{definition}
\begin{example}[taken from the paper by~\citeN{iclp2021}]\label{mercedes-vs-bmw}
Consider the following program whose declarative reading is \qemph{I prefer to buy
a Mercedes than a BMW. In case a Mercedes is available, I prefer a gas model to a
diesel one. A gas model of Mercedes is not available}.
\[
  \begin{array}{l}
  \mbox{\tt mercedes $\times$ bmw $\leftarrow$}\\
  \mbox{\tt gas\_mercedes $\times$ diesel\_mercedes $\leftarrow$ mercedes}\\
  \mbox{\tt false $\leftarrow$ gas\_mercedes, not false}
  \end{array}
\]
The last clause is a standard technique in ASP in order to state that an atom
({\tt gas\_mercedes} in our case) is not true. The
above program has two answer sets, namely:
\[
\begin{array}{l}
\{({\tt mercedes},T),({\tt bmw},F),({\tt gas\_mercedes},F^*),
        ({\tt diesel\_mercedes},T),({\tt false},F^*)\}\\
\{({\tt mercedes},F^*),({\tt bmw},T),({\tt gas\_mercedes},F^*),
        ({\tt diesel\_mercedes},F^*),({\tt false},F^*)\}
\end{array}
\]
According to the $\sqsubset$~ordering, the most-preferred answer set is the first
one because it minimizes the $F^*$ values. It is worth noting that under the
original semantics of LPODs~\cite{lpod-brewka,lpod-BNS04} two answer sets are
produced that are incomparable (and therefore they are both considered as
``most-preferred'').
\end{example}

%

\section{A Logical Characterization of Strong Equivalence for LPODs}\label{characterization}
In this section we establish a new, purely logical characterization of strong
equivalence for LPODs. Our investigation has as a starting point the work
of \citeN{FaberTW08}, in which an exhaustive study of different forms of strong
equivalence for LPODs was performed. Not all forms of strong equivalence studied
by \citeN{FaberTW08} are applicable in our case. An explanation of this state of
affairs and a detailed comparison of our technique with that
of \citeN{FaberTW08}, is given in Section~\ref{related}. In our work we examine
two notions of strong equivalence, namely \emph{strong equivalence under the
most-preferred answer sets}, and \emph{strong equivalence under all the answer sets}\footnote{These
two notions roughly correspond to the relations $\equiv^{i}_{s,\times}$ and
$\equiv_{s,\times}$ defined in the paper by \citeN{FaberTW08}.}.
We demonstrate that these notions can be captured by establishing logical
equivalence in the four-valued logic of Section~\ref{background} of the programs
involved.
\begin{definition}
Two LPODs $P_1$ and $P_2$ are termed \emph{strongly equivalent under the
most-preferred answer sets} if for every LPOD $P$, $P_1 \cup P$ and $P_2\cup P$ have
the same most-preferred answer sets.
\end{definition}
\begin{theorem}\label{inclusion-preference-theorem}
Two LPODs $P_1$, $P_2$ are strongly equivalent under the most-preferred answer sets
if and only if they are logically equivalent in four-valued logic.
\end{theorem}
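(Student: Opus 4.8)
The plan is to prove the two implications separately, treating the right-to-left direction as routine and the left-to-right direction by contraposition.

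For the direction \emph{logically equivalent $\Rightarrow$ strongly equivalent}, the key observation is that model-hood distributes over union: for any LPOD $P$ and any interpretation $I$, $I$ is a model of $P_i \cup P$ exactly when $I$ satisfies every rule of $P_i$ and every rule of $P$, so $\mathrm{Mod}(P_i \cup P) = \mathrm{Mod}(P_i) \cap \mathrm{Mod}(P)$. Everything downstream is defined purely from the set of models: by Definition~\ref{answer-set} an answer set is a solid $\preceq$-minimal model, and by Definition~\ref{most-preferred} a most-preferred answer set is a $\sqsubset$-minimal element of the set of answer sets, where the comparison $M_1 \sqsubset M_2$ of Definition~\ref{sqsubseteq-ordering} depends only on the $F^*$-atoms of the answer sets themselves. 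Hence $\mathrm{Mod}(P_1)=\mathrm{Mod}(P_2)$ forces $\mathrm{Mod}(P_1 \cup P)=\mathrm{Mod}(P_2 \cup P)$ for every $P$, and the two unions therefore have identical answer sets and identical most-preferred answer sets. I would record this as a one-line lemma ("answer sets and most-preferred answer sets are determined by the model set"), after which the implication is immediate.

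For the converse I would argue contrapositively. Assuming $P_1$ and $P_2$ are not logically equivalent, fix an interpretation $I$ that is a model of exactly one of them, say $I \in \mathrm{Mod}(P_1)\setminus\mathrm{Mod}(P_2)$, and let $\mathcal{A}$ be the finite set of atoms occurring in $P_1 \cup P_2$ (atoms outside $\mathcal{A}$ are irrelevant, since $\preceq$-minimality pushes them to $F$). The target is a \emph{pinning} context $P=P_I$ over $\mathcal{A}$, together with fresh auxiliary atoms, whose only solid model, restricted to $\mathcal{A}$, is $I$. Granting such a $P$, the proof closes cleanly: since $\mathrm{Mod}(P_1 \cup P)=\mathrm{Mod}(P_1)\cap\mathrm{Mod}(P)$ still contains $I$ and no other solid interpretation, $I$ is the unique, hence most-preferred, answer set of $P_1 \cup P$; whereas $I \notin \mathrm{Mod}(P_2)$ removes $I$ from $\mathrm{Mod}(P_2 \cup P)$ entirely, so $I$ is not even an answer set there, and the two programs differ at the level of most-preferred answer sets.

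The heart of the argument, and the main obstacle, is the construction of the pinning gadgets, one per truth value, using the fact that a rule is satisfied only when its head is at least as true, in the total order $F<F^*<T^*<T$, as its body. Forcing $I(A)=T$ is immediate via the fact $A\leftarrow{}$; forcing $I(A)=F$ should combine leaving $A$ unsupported with a constraint-style rule blocking the higher values; and forcing $I(A)=F^*$ is exactly where the connective $\times$ must be exploited, reproducing the ``suppressed disjunct'' behaviour that drives an atom to $F^*$ in Example~\ref{mercedes-vs-bmw}. I expect two genuine difficulties. First, the distinguishing interpretation $I$ need not be solid: if some $I(A)=T^*$, then $I$ can never be an answer set, so naive pinning is impossible and one must instead convert a $T^*$-discrepancy into an honest, solid most-preferred-answer-set discrepancy through auxiliary atoms and $\times$-rules — this is precisely where the preference-carrying four-valued structure is indispensable. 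Second, one must verify that the witness survives both minimisation layers simultaneously, being solid and $\preceq$-minimal (an answer set) as well as $\sqsubset$-minimal (most-preferred), and that the rules of $P_1$ neither eliminate it nor spawn a competitor that dominates it under $\sqsubset$; engineering the gadgets so that the combined program has $I$ as its \emph{unique} answer set is the clean way to discharge both minimality obligations at once.

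Finally I would check the bookkeeping — that the auxiliary atoms receive forced values and create neither spurious answer sets nor any change in the $\sqsubset$-comparison, and that the roles of $P_1$ and $P_2$ may be swapped without loss of generality. With the pinning construction established, contraposition yields the \emph{strongly equivalent $\Rightarrow$ logically equivalent} direction and completes the proof.
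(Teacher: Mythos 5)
Your $(\Leftarrow)$ direction is exactly the paper's argument and is fine. The $(\Rightarrow)$ direction, however, has a genuine gap: the construction you defer to (``a pinning context $P_I$ whose only solid model, restricted to $\mathcal{A}$, is $I$'') does not exist in this logic, and the fallback you gesture at for the $T^*$ case is not just a technicality but the place where the separation can \emph{reverse direction}, which your closing argument does not accommodate. Concretely: (a) you cannot pin an atom to $F^*$. Rule satisfaction only requires the head to be $\geq$ the body, so any program admitting an interpretation with $N(A)=F^*$ also admits one with $N(A)=T$; there is no rule that excludes $T$. The paper's way around this is not to exclude $N(A)=T$ as a model but to make it non-minimal, via the rule $f_A \leftarrow \pnot f_A, A$ over a fresh atom $f_A$: if $N(A)=T$ and $N$ is solid then $N(f_A)$ is forced to $T$, and lowering $f_A$ to $T^*$ yields a strictly $\preceq$-smaller model, so $N$ cannot be an answer set. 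Your plan of discharging both minimality obligations ``at once'' by uniqueness is therefore unavailable; you must argue $\preceq$-minimality and $\sqsubset$-minimality of the witness separately against competitors that \emph{are} models. (b) When the distinguishing model $M$ assigns $T^*$ to some atoms, the witness must be a modified interpretation $M'$ with those values raised to $T$, and $M'$ may well be a model of $P_2$ too. In that case you cannot conclude ``$I \notin \mathrm{Mod}(P_2\cup P)$ removes $I$ entirely'': instead the paper shows $M'$ \emph{is} a most-preferred answer set of $P_2 \cup P$ (using extra rules $B \leftarrow A$ for pairs with $M(A)=M(B)=T^*$ and $D \leftarrow \pnot A$ with $D$ fresh, to kill any smaller model of $P_2\cup P$) but is \emph{not} one of $P_1 \cup P$, because the interpretation $M''$ that keeps the original $T^*$ values is a model of $P_1\cup P$ strictly below $M'$. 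The discrepancy thus shows up on the $P_2$ side, the opposite of what your sketch assumes.

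In short, the two ideas that actually carry the paper's proof --- penalising unwanted $T$ values through auxiliary atoms so they fail $\preceq$-minimality rather than model-hood, and the case split on whether the lifted witness $M'$ satisfies $P_2$ (with the separation running in opposite directions in the two cases) --- are precisely the parts your proposal leaves unimplemented. You correctly identified where the difficulties lie, but the proof is those difficulties; as it stands the argument does not go through.
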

\begin{proof}
\noindent ($\Leftarrow$)
Assume that $P_1$ and $P_2$ are logically equivalent in four-valued logic. Then,
every four-valued model that satisfies one of them, also satisfies the other. This
means that for all programs $P$, $P_1 \cup P$ has the same models as
$P_2 \cup P$. But then, $P_1 \cup P$ has the same most-preferred answer sets as
$P_2 \cup P$ (because the most-preferred answer sets of a program depend only on the
set of all the models of the program). Therefore, $P_1 \cup P$ and $P_2 \cup P$
are strongly equivalent under the most-preferred answer sets.

\noindent ($\Rightarrow$)
Assume that $P_1$ and $P_2$ are strongly equivalent. Suppose that $P_1$ has a
model $M$ which is not a model of $P_2$.
Without loss of generality, we may assume that $M(A)=F$, for every
atom $A$ in $\Sigma$ that does not occur in $P_1 \cup P_2$.

We will show that we can
construct an interpretation $M'$ and a program $P$ such that $M'$ is a
most-preferred answer set of one of $P_1 \cup P$ and $P_2 \cup P$ but not of the other,
contradicting our assumption of strong equivalence.

First, we construct two sets of atoms that will help us define $P$. In
particular, we construct two sets of atoms $\mathcal{T}$ and $\mathcal{F}$ each
one containing a new atom for every $A$ in $P$ such that $M(A) = F^*$. More
formally, let $\mathcal{T} = \{ t_A \mid M(A) = F^* \}$ and
$\mathcal{F} = \{ f_A \mid M(A) = F^* \}$, where all $t_A$ and $f_A$ do not
appear in $P_1$ and $P_2$.
We define $M'$ as:
\[
  M'(A) = \begin{cases}
              T      & M(A) = T^* \\
              T      & A \in \mathcal{T} \\
              F^*    & A \in \mathcal{F} \\
              M(A)   & \text{otherwise}
          \end{cases}
\]
%
%
%
We claim that $M'$ is a model of $P_1$. To verify this, take any rule in $P_1$ of the form
\[
  C_1 \times \cdots \times C_n \leftarrow A_1,\ldots,A_m,{\pnot B_1},\ldots,{\pnot B_k}
\]
If $M(A_1, \ldots, A_m, \pnot B_1, \ldots, \pnot B_k) \geq T^*$, then it is also
$M(C_1 \times \cdots \times C_n) \geq T^*$, since $M$ is a model of $P_1$. Then, there
exists $j \leq n$ such that $M(C_i) = F^*$ for all $i<j$, and $M(C_j) \geq T^*$. It follows
that $M'(C_i) = F^*$ for all $i<j$, and $M'(C_j) = T$, which implies $M'(C_1 \times \cdots \times C_n) = T$.
Therefore, $M'$ satisfies the rule in this case.

If $M(A_1, \ldots, A_m, \pnot B_1, \ldots, \pnot B_k) = F^*$, then there exists
$A_i$ such that $M(A_i) = F^*$. By the definition of $M'$, $M'(A_i) = F^*$, and thus
$M'(A_1, \ldots, A_m, \pnot B_1, \ldots, \pnot B_k) \leq F^*$.
Since $M$ is a model of $P_1$ it satisfies the given rule and thus
$M(C_1 \times \cdots \times C_n) \geq F^*$. If $M(C_1 \times \cdots \times C_n) = F^*$,
then for all $C_i$, $M(C_i)=F^*$, which implies that $M'(C_i)=F^*$ and therefore
$M'(C_1 \times \cdots \times C_n)=F^*$. If $M(C_1 \times \cdots \times C_n) > F^*$
then there exists $j\leq n$ such that for all $i<j$, $M(C_i)=F^*$ and $M(C_j)>F^*$,
which implies that for all $i<j$, $M'(C_i)=F^*$ and $M'(C_j)>F^*$, and therefore
$M'(C_1 \times \cdots \times C_n)>F^*$. In both cases, $M'$ satisfies the given rule.

If $M(A_1, \ldots, A_m, \pnot B_1, \ldots, \pnot B_k) = F$, then either there exists
$A_i$ such that $M(A_i) = F$ or there exists $B_j$ such that $M(B_j) \geq T^*$.
It follows, by the definition of $M'$, that $M'(A_i) = F$ or $M'(B_j) = T$ and
as a result $M'(A_1, \ldots, A_m, \pnot B_1, \ldots, \pnot B_k) = F$.
Therefore $M'$ satisfies the rule in this case. Thus, $M'$ is a model of $P_1$.

We proceed by distinguishing two cases that depend on
whether $M'$ is a model of $P_2$ or not.

\smallskip\noindent
\underline{\emph{Case 1}}: $M'$ is not a model of $P_2$. We take:
    \[P =  \{ A\leftarrow \mid M'(A) = T \} \cup
           \{ A \times t_A\leftarrow \mid t_A \in {\cal T} \} \cup
          \{ f_A \leftarrow \pnot f_A, A \mid f_A \in {\cal F} \} \]

We claim that every model $N$ of
  $\{ A \times t_A \leftarrow \mid t_A \in {\cal T} \} \cup
          \{ f_A \leftarrow \pnot f_A, A \mid f_A \in {\cal F} \} $
has the following property:
\begin{equation} \label{prop:1} \tag{P1}
  \textup{for every atom } A, \textup{ if } M'(A) = F^*, \textup{ then } N(A) \neq F
\end{equation}
In order to prove our claim we distinguish two cases for atoms such that
$M'(A) = F^*$: the atoms where $M(A) = F^*$ and the atoms in $\mathcal{F}$. For the
first case, assume that for some $A$ it is $M(A) = F^*$ and $N(A) = F$. But then
there exists a rule $A \times t_A\leftarrow$ in $P$ which is not satisfied by
$N$, which is a contradiction. So, for all such atoms $A$ it should be
$N(A) \geq F^*$. For the second case, assume that for some $f_A$, it is $N(f_A) = F$.
Then, the rule $f_A \leftarrow \pnot f_A, A$ is not satisfied by $N$ (since
$N(A) \geq F^*$), which is also a contradiction. Therefore, our claim holds.

%
Now, it is easy to see that $M'$ is a model of $P$ and therefore a model of
$P_1 \cup P$. Moreover, it is a most-preferred answer set of $P_1 \cup P$. Indeed, let $N$
be a model of $P_1 \cup P$ and $N \prec M'$. Since $M'$ does not assign any
$T^*$, there exists $A$ such that either $M'(A) = T$ and $N(A) \prec T$ or
$M'(A) = F^*$ and $N(A) = F$. In the first case, $P$ contains a fact
$A \leftarrow$, which is not satisfied by $N$. In the second case, $N$ does not
satisfy property~\ref{prop:1}. In both cases, $N$ is not a model of $P$, which is a
contradiction. It follows that $M'$ is $\preceq$-minimal model of $P_1 \cup P$.

Assume now that there exists some $N$ which is a most-preferred answer set of $P_1 \cup P$
and $N \sqsubset M'$. There must exist some atom $A$ such that $M'(A) = F^*$ and
$N(A) \neq F^*$. We will show in the following that it should be $N(A) = T$ for
those atoms.
First, notice that $N(A) \neq T^*$ because since $N$ is most-preferred it is
also solid. Also, it must be $N(A) \neq F$, since $N$ is a model of $P$ and thus
it satisfies property~\ref{prop:1}. Therefore, $N(A) = T$ for the atoms such that
$M'(A) = F^*$ and $N(A) \neq F^*$.

However, we now claim that $N$ is not $\preceq$-minimal. Indeed, we can construct
$N'$ from $N$ that is also a model of $P_1 \cup P$ and $N' \prec N$. Define $N'$ as:
\[
  N'(A) = \begin{cases}
            T^* & \text{if $A \in {\cal F}$ and $N(A) = T$} \\
            N(A) & \text{otherwise}
          \end{cases}
\]
First, we need to establish that $N' \prec N$, that is, there exists atom $A$
such that $N'(A) \prec N(A)$. By the assumption $N \sqsubset M'$ we know that there exists atom
$A$ such that $N(A) \neq M'(A)$ and we have established that $M'(A) = F^*$ and
$N(A) = T$. The first case is for the atom $A$ to be $A=f_B \in {\cal F}$ for
some $B$ and it is straightforward that $N(f_B) = T$ and $N'(f_B) = T^*$.
The second case is for $A$ to be an atom such that $M(A) = F^*$;
then there exists a rule $f_A \leftarrow \pnot f_A, A$ in $P$ which must be
satisfied by $N$. But since $N(A) = T$, and $N$ is solid, the only way to
satisfy this rule is when $N(f_A) = T$. By definition of $N'$, $N'(f_A) = T^*$.
Therefore, $N' \prec N$. It is also easy to see that $N'$ is a model of
$P_1 \cup P$ because it satisfies all rules $f_A \leftarrow \pnot f_A, A$ and
$f_A$ does not occur in any other rule of $P_1 \cup P$.

Therefore, $M'$ is a most-preferred answer set of $P_1 \cup P$.
This contradicts the assumption of strong equivalence because $M'$ is not
even a model for $P_2 \cup P$.

\smallskip\noindent
\underline{\emph{Case 2}}: $M'$ is a model of $P_2$.
Let $D$ be an atom in $\Sigma - ({\cal T} \cup {\cal F})$ that does not occur
in $P_1 \cup P_2$. Such an atom always exists, since $\Sigma$ is a countably
infinite set and ${\cal T}$, ${\cal F}$, $P_1$, and $P_2$ are finite; moreover,
$M(D) = F$, by our assumption about $M$.
We take:
\[
  \begin{split}
  P = & \{ A\leftarrow\ \mid M(A) = T \} \, \cup\\
      & \{ A \times t_A\leftarrow\ \mid  t_A \in {\cal T} \} \cup
        \{ f_A \leftarrow \pnot f_A, A \mid  f_A \in {\cal F} \}\, \cup \\
      & \{ B \leftarrow A \mid A \neq B \text{ and } M(A) = M(B) = T^* \}\, \cup \\
      & \{ D \leftarrow \pnot A \mid M(A) = T^* \}
  \end{split}
\]
It is easy to see that $M'$ satisfies every formula in $P$, and therefore it is
a model of both $P_1 \cup P$ and $P_2 \cup P$. We will show that $M'$ is a
most-preferred answer set of $P_2 \cup P$ but not a most-preferred answer set of $P_1 \cup P$.

We proceed by showing that $M'$ is a $\preceq$-minimal model of $P_2 \cup P$.
Assume there exists a model $N$ of $P_2 \cup P$ such that $N \prec M'$.

%
We first show that there exists an atom $A$ such that $M(A) = T^*$ and $N(A) = T$.
Consider an arbitrary atom $C$.
If $M(C)=T$, then it is also $N(C)=T$, because $P$ contains $C \leftarrow$ and $N$ is a
model of $P$.
If $M(C)=F^*$, then by the construction of $M'$ it is $M'(C)=F^*$.
Since $N$ is a model of $P$, by property~\ref{prop:1} we obtain $N(C) \neq F$. This implies
$N(C) = F^*$, because $N \prec M'$.
If $M(C) = F$, then by the construction of $M'$ it is $M'(C) = F$, and since $N \prec M'$ we get $N(C) = F$.
Therefore, if $M(C) \neq T^*$, then $M(C)=N(C)$.
There should be, however, an atom $A$ that occurs in $P_2$ such that $N(A) \neq M(A)$
because $N$ is a model of $P_2$ and $M$ is not.
Obviously, for that atom it must be $M(A) = T^*$ and $N(A) \neq T^*$.
Now, notice that there exists a rule $D \leftarrow \pnot A$ in $P$
where $M(D) = F$ and must be satisfied by $N$ since it is also a
model of $P$. Since $M(D) = F$ implies
$N(D) = F$, the only remaining possibility is $N(A) = T$.

We next show that there exists an atom $B$ such that $M(B) = N(B) = T^*$.
Since $N \prec M'$, there exists $B$ such that $N(B) \prec M'(B)$.
The last relation immediately implies $M'(B) \neq F$. Notice also that, by the construction of $M'$,
it is $M'(B)\neq T^*$. Moreover, it cannot be $M'(B) = F^*$, since in that case
from $N(B) \prec M'(B)$ we would obtain $N(B) = F$, which contradicts property~\ref{prop:1}.
Therefore, the only remaining value is $M'(B) = T$.
For that atom, it cannot be $M(B) = T$ because then it would also be $N(B) = T$
(since $B \leftarrow$ is a rule in $P$ and $N$ is a model of $P$), which would
contradict $N(B) \prec M'(B)$. It follows by the construction of $M'$ that $M(B) = T^*$.
We claim that $N(B) = T^*$, that is, it cannot be $N(B) = F$.
Since $M(B) = T^*$ there exists a rule $D \leftarrow \pnot B$ where $M(D) = F$.
Since, $M(D) = F$, we get $N(D) = F$. If we assume
that also $N(B) = F$ then $N$ does not satisfy this rule which is a
contradiction. Therefore, $N(B) = T^*$.

Since $M(A) = M(B) = T^*$ there exists rule $B \leftarrow A$ in $P$ that is not
satisfied by $N$ because we have showed that $N(B) = T^*$ and $N(A) = T$.
Therefore, $N$ is not a model of $P_2 \cup P$, which is a contradiction.

We conclude that $M'$ is $\preceq$-minimal model of $P_2 \cup P$.
Following an identical reasoning as in the final paragraph of the proof of Case 1,
we can show that $M'$ is a most-preferred answer set of $P_2 \cup P$. In order to conclude
the proof, it suffices to show that $M'$ is not a most-preferred answer set of $P_1 \cup P$.
We define $M''$ as:
\[
  M''(A) = \begin{cases}
              T      & A \in \mathcal{T} \\
              F^*    & A \in \mathcal{F} \\
              M(A)   & \text{otherwise}
          \end{cases}
\]
$M''$ is not a model of $P_2$ because $M$ is not a model of $P_2$.
By definition, $M'' \preceq M'$. But $M'' \neq M'$ because $M'$ is a model of
$P_2$ and $M''$ is not. Therefore, $M'' \prec M'$. Observe that $M''$ agrees
with $M$ for all $A$ that appear in $P_1$ and since $M$ is a model of $P_1$,
$M''$ is also a model of $P_1$. $M''$ also satisfies the rules of $P$ and
therefore it is a model of $P_1 \cup P$. Therefore, $M'$ is not a most-preferred
answer set of $P_1 \cup P$.
\end{proof}

We now consider the second notion of strong equivalence that is applicable in
our setting.
\begin{definition}
Two LPODs $P_1$, $P_2$ are termed \emph{strongly equivalent
under all the answer sets}, if for every LPOD $P$, $P_1 \cup P$ and
$P_2 \cup P$ have the same answer sets.
\end{definition}
\begin{theorem}\label{all-answer-sets-theorem}
Two LPODs $P_1$, $P_2$ are strongly equivalent under all the answer sets
if and only if they are logically equivalent in four-valued logic.
\end{theorem}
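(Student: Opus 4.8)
The plan is to prove both directions by leaning heavily on the machinery already developed for Theorem~\ref{inclusion-preference-theorem}, exploiting the fact that, by Definition~\ref{answer-set}, an answer set is precisely a $\preceq$-minimal solid model, a notion that depends only on the set of models of a program.

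For the ($\Leftarrow$) direction I would argue exactly as in the corresponding direction of Theorem~\ref{inclusion-preference-theorem}. If $P_1$ and $P_2$ are logically equivalent in four-valued logic, then for every LPOD $P$ the programs $P_1 \cup P$ and $P_2 \cup P$ have the same models. Since the answer sets of a program are its $\preceq$-minimal solid models, and both $\preceq$-minimality and solidity are determined entirely by the set of models (solidity being a property of the interpretation itself), $P_1 \cup P$ and $P_2 \cup P$ have the same answer sets. Hence $P_1$ and $P_2$ are strongly equivalent under all the answer sets.

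For the ($\Rightarrow$) direction I would reuse the construction in the proof of Theorem~\ref{inclusion-preference-theorem}. Assuming $P_1$ and $P_2$ are not logically equivalent, take (without loss of generality) a model $M$ of $P_1$ that is not a model of $P_2$, normalised so that $M(A) = F$ for every atom off $P_1 \cup P_2$, and build the auxiliary atom sets $\mathcal{T}$, $\mathcal{F}$ and the interpretation $M'$ as before. As established there, $M'$ is a solid model of $P_1$, and I would split into the same two cases. In Case~1 ($M'$ not a model of $P_2$) the same program $P$ makes $M'$ a $\preceq$-minimal model of $P_1 \cup P$; since $M'$ is solid it is an answer set of $P_1 \cup P$, whereas it is not even a model of $P_2 \cup P$ and so cannot be an answer set of it. In Case~2 ($M'$ a model of $P_2$) the same program $P$ makes $M'$ a $\preceq$-minimal solid model, hence an answer set, of $P_2 \cup P$; at the same time the interpretation $M''$ constructed there satisfies $M'' \prec M'$ and is a model of $P_1 \cup P$, so $M'$ is not $\preceq$-minimal and therefore not an answer set of $P_1 \cup P$. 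Either case exhibits an interpretation that is an answer set of exactly one of $P_1 \cup P$ and $P_2 \cup P$, contradicting strong equivalence under all the answer sets.

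The point I expect to require the most care, though it is genuinely lighter here than in Theorem~\ref{inclusion-preference-theorem}, is checking that the two reused constructions deliver a difference already at the level of answer sets rather than only at the level of most-preferred answer sets. This is exactly where the proof simplifies: the earlier argument established the $\preceq$-minimality and solidity of $M'$, and the strict $\preceq$-domination of $M'$ by $M''$ in Case~2, as intermediate facts on the way to the most-preferred conclusion, and these intermediate facts are precisely what the definition of answer set demands. Consequently the additional most-preferred refinement invoked in Theorem~\ref{inclusion-preference-theorem} can simply be dropped.
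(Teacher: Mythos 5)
Your proposal is correct and matches the paper's own proof, which likewise reuses the constructions of $M'$, $P$, and $M''$ from Theorem~\ref{inclusion-preference-theorem} verbatim, stops at $\preceq$-minimality (plus solidity of $M'$), and drops the $\sqsubset$-minimization arguments. No substantive difference in approach.
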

The proof of the above theorem, which can be found in~\ref{appendix1}, follows
the same steps as that of the proof of Theorem~\ref{inclusion-preference-theorem},
omitting the parts of the proof related to $\sqsubset$-minimization.
\begin{corollary}
Two LPODs $P_1$, $P_2$ are strongly equivalent under the most-preferred answer sets
if and only if they are strongly equivalent under all the answer sets.
\end{corollary}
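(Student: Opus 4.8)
The plan is to derive the statement directly from the two characterization theorems, which both reduce their respective notions of strong equivalence to one and the same condition. By Theorem~\ref{inclusion-preference-theorem}, two LPODs $P_1$ and $P_2$ are strongly equivalent under the most-preferred answer sets if and only if they are logically equivalent in four-valued logic; by Theorem~\ref{all-answer-sets-theorem}, the same pair $P_1$ and $P_2$ are strongly equivalent under all the answer sets if and only if they are logically equivalent in four-valued logic.

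First I would observe that the right-hand sides of these two biconditionals are literally identical: in both cases the condition is logical equivalence of $P_1$ and $P_2$ in the four-valued logic of Section~\ref{background}. Chaining the two equivalences through this common middle term then yields that $P_1$ and $P_2$ are strongly equivalent under the most-preferred answer sets if and only if they are logically equivalent in four-valued logic, and the latter holds if and only if they are strongly equivalent under all the answer sets. By transitivity of ``if and only if'', the two notions of strong equivalence coincide, which is exactly the claim.

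There is no genuine obstacle here, since all the mathematical content resides in the two theorems being invoked; the corollary is a one-line transitivity argument once both characterizations are in hand. The only point demanding any care is to confirm that the common condition is the \emph{same} logical equivalence relation in both theorems --- over the same set of interpretations and with the same four-valued semantics --- so that the middle term may legitimately be cancelled. Inspecting the statements of Theorem~\ref{inclusion-preference-theorem} and Theorem~\ref{all-answer-sets-theorem} confirms this, and hence the proof is immediate.
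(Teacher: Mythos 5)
Your proposal is correct and matches the paper's (implicit) argument exactly: the corollary is obtained by chaining Theorem~\ref{inclusion-preference-theorem} and Theorem~\ref{all-answer-sets-theorem} through their common right-hand side, logical equivalence in the four-valued logic. Nothing further is needed.
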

We feel that the above corollary highlights an interesting fact: it states that assessing
the observable behaviour of two programs with respect to the most-preferred answer sets,
suffices to determine strong equivalence of the programs.

Due to the above corollary, in the following we will often talk
about ``strong equivalence of LPODs'' without specifying the exact type of equivalence (since they
coincide).
\begin{example}
One can easily verify (using a four-valued truth table or a case analysis) that
the programs:
\[
  \begin{array}{l}
  \mbox{\tt a $\times$ b $\leftarrow$}\\
  \mbox{\tt a $\leftarrow$}
  \end{array}
\]
and the program that consists of just the following fact:
\[
  \begin{array}{l}
  \mbox{\tt a $\leftarrow$}
  \end{array}
\]
are strongly equivalent. Similarly, one can verify that the programs given in
Example 3 of the paper by \citeN{FaberTW08}, namely:
\[
  \begin{array}{l}
  \mbox{\tt c $\times$ a $\times$ b $\leftarrow$}\\
  \mbox{\tt a $\leftarrow$ c}\\
  \mbox{\tt b $\leftarrow$ c}\\
  \mbox{\tt c $\leftarrow$ a,b}
  \end{array}
\]
and:
\[
  \begin{array}{l}
  \mbox{\tt c $\times$ a $\times$ b $\leftarrow$}\\
  \mbox{\tt c $\times$ c $\times$ b $\times$ a $\leftarrow$}\\
  \mbox{\tt a $\leftarrow$ c}\\
  \mbox{\tt b $\leftarrow$ c}\\
  \mbox{\tt c $\leftarrow$ a,b}
  \end{array}
\]
are also strongly equivalent. Notice that the above two programs are also strongly equivalent
under the relations $\equiv^{i}_{s,\times}$ and $\equiv_{s,\times}$ defined in the paper by
\citeN{FaberTW08} (see the discussion in Example 3, page 441, of the aforementioned paper).
\end{example}
We now demonstrate that our characterization of strong equivalence, when restricted to normal
logic programs, retains the spirit of the initial characterization of strong equivalence for such programs
\footnote{Actually, the syntax of the programs treated in the paper by \citeN{LifschitzPV01},
is broader than that of normal logic programs.}~\cite{LifschitzPV01}. More specifically, we show that in order to characterize
strong equivalence for normal logic programs, it suffices to look at their models that contain
only the truth values $F$, $T^*$, and $T$.

We define strong equivalence for normal programs in the standard way~\cite{LifschitzPV01}.
The ``standard answer set semantics'' is the usual stable model semantics~\cite{GL88} of
normal logic programs.
\begin{definition}
Two normal logic programs $P_1$ and $P_2$ are termed \emph{strongly equivalent under the
standard answer set semantics}, if for every normal logic program $P$, $P_1 \cup P$
and $P_2 \cup P$ have the same standard answer sets.
\end{definition}
The following definition and theorem characterize strong equivalence of normal programs
in our setting.
\begin{definition}
An interpretation $I$ of an LPOD $P$ is called \emph{three-valued} if for all
atoms $A$ in $P$, it is $I(A) \neq F^*$. A \emph{three-valued model} of $P$ is a three-valued
interpretation of $P$ that is also a model of $P$.
\end{definition}
\begin{theorem}\label{backwards-theorem}
Let $P_1$, $P_2$ be normal logic programs. Then, $P_1$ and $P_2$ are strongly
equivalent under the standard answer set semantics if and only if they have
the same three-valued models.
\end{theorem}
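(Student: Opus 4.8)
The plan is to reduce the statement to the classical characterization of strong equivalence for normal programs via the logic of here-and-there (HT), as established by \citeN{LifschitzPV01}: two normal programs are strongly equivalent under the standard answer set semantics if and only if they have the same HT models. Granting this, it suffices to prove that, for a normal program $P$, the three-valued models of $P$ (in the sense of the four-valued logic of Section~\ref{background}) are in bijective correspondence with the HT models of $P$. I would set up the correspondence at the level of interpretations: an HT interpretation is a pair $(H,T)$ of sets of atoms with $H \subseteq T$, and I would associate to it the three-valued interpretation $I$ defined by $I(A)=T$ if $A \in H$, $I(A)=T^*$ if $A \in T \setminus H$, and $I(A)=F$ if $A \notin T$. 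This is a bijection between three-valued interpretations and HT interpretations, under which the total order $F < T^* < T$ mirrors the standard three-element G\"odel chain underlying HT.

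The core step is a connective-by-connective verification that, under this bijection, the four-valued truth tables restricted to $\{F,T^*,T\}$ compute exactly the HT (G\"odel) truth values. For conjunction this is immediate, since $\wedge$ is interpreted as $\min$ in both logics. For default negation it is a direct table check: the four-valued $\pnot$ sends $F$ to $T$ and both $T^*$ and $T$ to $F$, which coincides with G\"odel negation (true at the bottom value, bottom otherwise). Consequently, for every normal-program body $A_1,\ldots,A_m,\pnot B_1,\ldots,\pnot B_k$ the four-valued value of the body equals its HT value. It then remains to compare the rule itself, $C \leftarrow \mathit{body}$: in the four-valued logic this rule evaluates to $T$ exactly when $I(C) \geq I(\mathit{body})$, while in HT the rule is satisfied exactly when the value of the body is $\le$ the value of the head in the G\"odel chain; these two conditions coincide under the identification $F<T^*<T$. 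Hence $I$ is a three-valued model of $P$ if and only if the corresponding $(H,T)$ is an HT model of $P$, and the theorem follows: same three-valued models $\iff$ same HT models $\iff$ (by \citeN{LifschitzPV01}) strongly equivalent.

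I expect the only genuinely delicate point to be the treatment of implication. The four-valued connective $\leftarrow$ is two-valued (it returns only $T$ or $F$), whereas G\"odel implication may take an intermediate value, so the two connectives do not agree as functions. The reconciliation is that in a normal program the implication $\leftarrow$ occurs solely at the outermost level of each rule, and being a model depends only on whether each rule attains the designated top value $T$; on the predicate ``consequent $\ge$ antecedent'' that governs attaining $T$, the two semantics agree. I would therefore phrase the correspondence lemma at the level of \emph{rule satisfaction} rather than as an equality of the implication connectives, and note that nothing in the argument requires the partial order $\preceq$, since the reduction bypasses the answer-set machinery of \citeN{iclp2021} entirely and appeals directly to the HT characterization.
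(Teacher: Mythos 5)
Your argument is correct, but it takes a genuinely different route from the paper's. The paper stays entirely inside the four-valued framework and is self-contained: for ($\Leftarrow$) it invokes a lemma of \citeN{iclp2021} stating that for normal programs the answer sets of Definition~\ref{answer-set} coincide with the standard answer sets, and for ($\Rightarrow$) it explicitly constructs a separating context program $P$ (facts $A\leftarrow$ for the $T$-atoms, plus rules $B \leftarrow A$ and $D \leftarrow \pnot A$ for the $T^*$-atoms) together with the two-valued collapse $M'$ of a three-valued model $M$ of $P_1$ that is not a model of $P_2$, mirroring the construction of Theorem~\ref{inclusion-preference-theorem}. You instead reduce to the theorem of \citeN{LifschitzPV01}: you set up the bijection sending atoms with value $T$ into the here-world, atoms with value $T^*$ into the there-world only, and atoms with value $F$ outside both, and you verify connective-by-connective that on normal rules the four-valued semantics restricted to $\{F,T^*,T\}$ agrees with the three-valued G\"odel presentation of here-and-there; your isolation of the implication issue is exactly right --- the two implication connectives differ as truth functions, but they agree on the predicate ``head $\geq$ body,'' which is all that rule satisfaction tests. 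Your route is shorter and arguably more illuminating, since it makes explicit that the $F^*$-free fragment of the four-valued logic is essentially HT; its cost is that it imports the LPV theorem as a black box, and you should note explicitly that the distinguishing contexts in that theorem can be taken to be normal programs, so the characterization does apply to the paper's notion of strong equivalence in which contexts range over normal programs. The paper's proof buys self-containment and uniformity with the LPOD constructions of Section~\ref{characterization}, at the price of essentially re-deriving the relevant instance of the LPV-style context construction inline.
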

The proof of the above theorem is given in~\ref{appendix1}.

\section{The Complexity of Strong Equivalence for LPODs}\label{complexity}
In this section we examine the complexity of strong equivalence under our new
characterization. Since the two versions of strong equivalence that we have examined
have an identical characterization (see Theorems~\ref{inclusion-preference-theorem} and~\ref{all-answer-sets-theorem}), the same complexity applies in both cases.

Our proof establishes \coNP-hardness by a direct (and quite simple) reduction
from 3SAT, which uses the special structure of LPODs in a crucial way. The corresponding
proof by \citeN{FaberTW08} utilizes the more involved \coNP-hardness
result of~\citeN{Lin02} for strong equivalence of normal logic programs~\footnote{As remarked
by one of the reviewers, the \coNP-completeness of strong equivalence for standard ASP programs was
first shown in the paper by \citeN{PearceTW01}.}. In this respect,
we feel that the proof that follows, apart from the fact that it applies to our new
characterization of strong equivalence, also has an interest in its own right due
to its different approach.
\begin{theorem}
Strong equivalence of LPODs
is a \coNP-complete problem.
\label{thm-coNP}
\end{theorem}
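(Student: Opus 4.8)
By Theorems~\ref{inclusion-preference-theorem} and~\ref{all-answer-sets-theorem}, strong equivalence of two LPODs $P_1,P_2$ coincides with their logical equivalence in the four-valued logic, i.e.\ with having the same four-valued models. So the problem reduces to deciding this semantic equivalence. For membership in $\mathbf{coNP}$, I would show that the complementary problem (\emph{non}-equivalence) is in $\mathbf{NP}$: a witness is a single interpretation $M$ that is a model of one program but not the other. The key observation is that although $\Sigma$ is infinite, only the atoms occurring in $P_1\cup P_2$ can affect whether a rule evaluates to $T$; atoms absent from both programs can be fixed to $F$ without loss of generality, exactly as in the proof of Theorem~\ref{inclusion-preference-theorem}. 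Hence a candidate witness is a function from the finitely many atoms of $P_1\cup P_2$ into the four truth values $V=\{F,F^*,T^*,T\}$, whose size is polynomial in the input. Checking that $M$ is a model of one program and not the other amounts to evaluating every rule under $M$, which is done in polynomial time following Definition~\ref{interpretation-and-semantics}. This places non-equivalence in $\mathbf{NP}$, so equivalence is in $\mathbf{coNP}$.

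\medskip

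\noindent\textbf{For $\mathbf{coNP}$-hardness} I would give a direct reduction from the complement of 3SAT, i.e.\ from \textsc{Taut} (equivalently, I reduce 3SAT to \emph{non}-equivalence). Given a 3CNF formula $\varphi$ over variables $x_1,\ldots,x_n$, the idea is to build two LPODs $P_1$ and $P_2$ that are logically equivalent in the four-valued logic \emph{iff} $\varphi$ is unsatisfiable. The reduction should exploit the special structure of LPODs, and in particular the interplay between the value $F^*$ and the $\times$ connective, so that the four truth values encode a propositional assignment together with the ``intermediate'' behaviour. A natural design is to represent each variable $x_i$ by an atom, to use rules so that the models of $P_1$ (or of a small gadget distinguishing $P_1$ from $P_2$) correspond precisely to the satisfying assignments of $\varphi$, and to arrange that $P_1$ and $P_2$ differ on a model exactly when such a satisfying assignment exists. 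Thus a distinguishing model witnesses satisfiability of $\varphi$, and conversely.

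\medskip

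\noindent\textbf{The main obstacle} is designing the gadget so that the two-valued satisfiability question is faithfully captured by four-valued modelhood, without inadvertently admitting spurious distinguishing models arising from the extra values $F^*$ and $T^*$. Concretely, I expect the delicate part to be controlling the behaviour of $\times$ and of $\pnot$ (whose truth hinges on the threshold $F^*$) so that the intermediate truth values cannot produce a model of one program that fails to be a model of the other \emph{unless} it genuinely encodes a satisfying assignment of $\varphi$. Handling this likely requires auxiliary atoms and constraint-style rules (in the spirit of the $\mathtt{false}\leftarrow\cdots$ technique of Example~\ref{mercedes-vs-bmw}) to pin down the admissible truth values on the variable atoms to the classical pair $\{F,T\}$, thereby ruling out degenerate witnesses. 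Once the gadget is shown to be sound and complete in this sense, correctness of the reduction follows, and combining hardness with the membership argument yields $\mathbf{coNP}$-completeness.
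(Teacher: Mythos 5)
Your membership argument is correct and is essentially the one in the paper: a certificate for non-equivalence is the restriction of a distinguishing four-valued interpretation to the finitely many atoms of $P_1 \cup P_2$, and checking that it is a model of one program but not the other takes polynomial time. The gap is in the hardness half. What you give there is a statement of the desiderata for a reduction (``build $P_1,P_2$ that are four-valued equivalent iff $\varphi$ is unsatisfiable'') together with a list of anticipated difficulties, but no actual construction; the gadget \emph{is} the proof, so as it stands the hardness direction is not established.

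For comparison, the paper's gadget is very small and resolves your ``main obstacle'' differently from what you anticipate. Writing $\widetilde{L}=C$ if $L=C$ and $\widetilde{L}=\pnot C$ if $L=\neg C$, it sets $Q=\{A\leftarrow \widetilde{L_{i,1}},\widetilde{L_{i,2}},\widetilde{L_{i,3}} \mid 1\le i\le n\}$ for fresh atoms $A,B$, and takes $P_1=Q\cup\{A\times B\leftarrow\}$ and $P_2=P_1\cup\{A\leftarrow\}$. Since $P_1\subset P_2$, non-equivalence means a model of $P_1$ that violates $A\leftarrow$; the only way to satisfy $A\times B\leftarrow$ without $I(A)=T$ is $I(A)=F^*$ and $I(B)=T$, so a distinguishing model is exactly a model of $P_1$ with $I(A)=F^*$. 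The clause rules then force every body to evaluate to at most $F^*$, i.e.\ every clause to contain a literal $L_{i,j}$ with $I(\widetilde{L_{i,j}})\le F^*$. Crucially, there is no need to ``pin down the admissible truth values to $\{F,T\}$'' with auxiliary constraints as you propose: one simply reads off a two-valued assignment by $J(C)=T$ iff $I(C)\le F^*$, and the threshold semantics of $\wedge$, $\pnot$ and $\leftarrow$ makes this collapse sound in both directions, so no spurious witnesses arise. If you pursue your own gadget, you need to supply this level of concrete detail before the reduction counts as a proof.
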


\begin{proof}
Let $P_1,P_2$ be two LPODs that are not strongly equivalent.
Then, without loss of
generality, there exists a four-valued interpretation $I$ that is a model of
$P_1$, but not a model of $P_2$.
Assume that the ground atoms that occur in $P_1 \cup P_2$ are $A_1, A_2, \dots, A_m$, and
consider the certificate $\mathcal{C}=[A_1,I(A_1),A_2,I(A_2), \dots, A_m,I(A_m)]$. $\mathcal{C}$
has size polynomial to the size of $(P_1, P_2)$; moreover, given $P_1$, $P_2$ and $\mathcal{C}$
it can be verified in polynomial time that $P_1$ and $P_2$ are not strongly equivalent.
Thus, deciding whether two programs are strongly equivalent
is in \coNP.

We next prove that strong equivalence of LPODs
is also a \coNP-hard problem, using a polynomial time reduction of 3SAT to the complement
of this problem.

Let $\phi= \bigwedge_{i=1}^n c_i$ be a propositional formula in conjunctive normal form,
where $c_i = L_{i,1} \vee L_{i,2} \vee L_{i,3}$ and $L_{i,j}$ is a literal
(that is, either a variable or the negation of a variable). For convenience, we may assume that
the variables that occur in $\phi$ are elements of $\Sigma$.


We will construct two programs $P_1$, $P_2$, such that $\phi$ is satisfiable if and only if
$P_1$ and $P_2$ are not strongly equivalent.

For every literal $L$ we define $\widetilde{L}$ as follows:
\[
             \widetilde{L} = \left\{
                             \begin{array}{ll}
                             L & \mbox{if $L = C$, for some $C \in \Sigma$}\\
                             not\ C & \mbox{if $L = \neg C$, for some $C \in \Sigma$}\\
                             \end{array}
                      \right.
       \]
Let $A$, $B$ be two propositional variables in $\Sigma$ that do not occur in $\phi$ and let $Q$ be
\[Q = \{ A \leftarrow \widetilde{L_{i,1}},\widetilde{L_{i,2}},\widetilde{L_{i,3}} \mid 1 \leq i \leq n\}\]
The LPODs $P_1$ and $P_2$ are defined as follows:
\[ P_1 = Q \cup \{ A \times B \leftarrow \} \]
\[ P_2 = Q \cup \{ A \times B \leftarrow \} \cup \{ A \leftarrow \}\]

Assume that $\phi$ is satisfiable and let $J$ be a two-valued interpretation such that $J(\phi) = T$.
We define the four-valued interpretation $I$ as follows:
\begin{eqnarray*}
I(A) & = & F^* \\
I(B) & = & T \\
I(C) & = & F, \textup{ if } C \textup{ occurs in } \phi \textup{ and } J(C)=T \\
I(C) & = & T, \textup{ if } C \textup{ occurs in } \phi \textup{ and } J(C)=F
\end{eqnarray*}

Consider an arbitrary rule $A \leftarrow \widetilde{L_{i,1}},\widetilde{L_{i,2}},\widetilde{L_{i,3}}$
in $Q$.
Then, $L_{i,1} \vee L_{i,2} \vee L_{i,3}$ is a clause in $\phi$;
since $J$ satisfies $\phi$, it holds $J(L_{i,j})=T$, for some $j\ \in \{1,2,3\}$.
Therefore, $I(\widetilde{L_{i,j}})=F$, which implies that $I$ satisfies the rule
$A \leftarrow \widetilde{L_{i,1}},\widetilde{L_{i,2}},\widetilde{L_{i,3}}$.
Moreover, $I(A \times B) = T$. We conclude that $I$ is a model of $P_1$; however,
$I$ is not a model of $P_2$, since $I(A)=F^*$.
Therefore, $P_1$ and $P_2$ are not logically equivalent in the four-valued logic,
which implies that they are not strongly equivalent.

Conversely, assume that $P_1$ and $P_2$ are not strongly equivalent.
Then, $P_1$ and $P_2$ are not logically equivalent in our four-valued logic.
Since $P_1 \subset P_2$, there exists a four-valued interpretation $I$ that is a model of $P_1$,
but not a model of $P_2$. We define the following two-valued interpretation for the variables in $\phi$:
\[
             J(C) = \left\{
                             \begin{array}{ll}
                             T & \mbox{if $I(C) \leq F^*$}\\
                             F & \mbox{if $I(C) \geq T^*$}\\
                             \end{array}
                      \right.
\]
We will show that $J$ satisfies $\phi$. We first prove some properties of $I$.

Since $I$ is a model of $P_1$, it must be either $I(A)=T$, or $I(A)=F^*$ and
$I(B)=T$, so that the rule $A \times B\leftarrow$ is satisfied. However, in the former
case, $I$ should also be a model of $P_2$ (since $P_2-P_1 = \{ A \leftarrow \}$), which is
a contradiction. Therefore, only the latter case is possible, that is,
$I(A)=F^*$.

Consider an arbitrary clause $c_i = L_{i,1} \vee L_{i,2} \vee L_{i,3}$ in $\phi$.
Since $I$ is a model of $P_1$, $I$ satisfies the rule
$A \leftarrow \widetilde{L_{i,1}},\widetilde{L_{i,2}},\widetilde{L_{i,3}}$ in $Q \subset P_1$.
Therefore, $\min\{I(\widetilde{L_{i,1}}),I(\widetilde{L_{i,2}}),I(\widetilde{L_{i,3}})\} \leq I(A) = F^*$,
which implies that there exists a $j \in \{1,2,3\}$ such that $I(\widetilde{L_{i,j}}) \leq F^*$.
But then, $J(L_{i,j})=T$. We conclude that $\phi$ is satisfiable.
\end{proof}

\section{Related and Future Work}\label{related}
The work on strong equivalence, started with the pioneering results of~\citeN{LifschitzPV01},
but has since been extended to various formal systems. In particular, strong equivalence
has been abstractly studied as a property across a variety of preferential formalisms~\cite{FaberTW13}.
To our knowledge however, the only existing work on the strong equivalence of LPODS
is the paper by \citeN{FaberTW08}. In that work the authors present an exhaustive study of
several notions of strong equivalence for LPODs. More specifically, given LPODs
$P$, $Q$, they consider the following notions of strong equivalence:
\begin{enumerate}
\item $P \equiv_s Q$ holds iff the standard answer sets of $P$ and $Q$ coincide
      under any extension by ordinary (namely, \emph{normal}) programs.

\item $P \equiv_{s,\times} Q$ holds iff the standard answer sets of $P$ and $Q$
      coincide under any extension by LPODs.

\item $P \equiv_s^{\sigma} Q$ holds iff the $\sigma$-preferred answer sets of
      $P$ and $Q$ coincide under any extension by ordinary programs, where
      $\sigma \in \{i,p,c\}$ and the indices $i$, $c$, and $p$ correspond to the
      \emph{inclusion}, \emph{Pareto}, and \emph{cardinality} orderings respectively
      (see the paper by~\citeN{lpod-BNS04} for formal definitions of these orderings).

\item $P \equiv_{s,\times}^{\sigma} Q$ holds iff the $\sigma$-preferred answer
      sets of $P$ and $Q$ coincide under any extension by LPODs, where
      $\sigma \in \{i,p,c\}$.
\end{enumerate}
Considering the above notions, the study of~\citeN{FaberTW08} is certainly
broader than the present work. We have not considered cases (1) and
(3) above because in the standard
definition of strong equivalence~\cite{LifschitzPV01} both the programs under
comparison and the context-programs, all belong to the same source language (in
our case, LPODs). Of course, there may exist application domains where relations
like $\equiv_s$ and $\equiv_s^{\sigma}$ might be of interest. In such a case, it
might prove interesting to extend the present work in this direction. Case (2)
above is covered by our Theorem~\ref{all-answer-sets-theorem}. Finally, from
case (4) above, we cover only the subcase where $\sigma$ is the \emph{inclusion}
preference. The subcases of Pareto and cardinality preferences are not covered
because the semantics of~\citeN{iclp2021} on which the present work is based, is
defined using the relation $\sqsubset$, which is the model-theoretic version of
the inclusion preference of~\cite{lpod-brewka,lpod-BNS04}. It is important,
however, to stress that the inclusion preference is probably the most fundamental
among the three orderings and the initial paper introducing
LPODs~\cite{lpod-brewka}, used only this one. The Pareto and
cardinality preferences were proposed subsequently in order to remedy the
shortcomings of the initial semantics of LPODs~\cite[see the discussion in page 342]{lpod-BNS04}.
Notice also that the cardinality preference can not be
generalized in a direct way to first-order programs whose ground instantiation
consists of an infinite number of rules.

Recapitulating, the two notions of strong equivalence that we cover in the present
paper (Theorems~\ref{inclusion-preference-theorem} and~\ref{all-answer-sets-theorem}),
correspond to the relations $\equiv^{i}_{s,\times}$ and $\equiv_{s,\times}$
defined in the paper by \citeN{FaberTW08}. In our case, both notions of strong equivalence coincide,
because they have a  unique characterization as logical equivalence in our four-valued logic.
On the other hand, the relations $\equiv^{i}_{s,\times}$ and $\equiv_{s,\times}$ do not coincide
\cite[see Theorem 21]{FaberTW08}. This means that our approach and that of \citeN{FaberTW08}
are different: there exist programs that are strongly equivalent with respect to one of the
approaches and not strongly equivalent with respect to the other approach. This was expected since
the two approaches are based on markedly different semantics. Although it does not seem straightforward
to establish a formal relation between our framework and that of \citeN{FaberTW08}, we can find examples
where the two approaches give different results.
\begin{example}
Consider the following two programs given in Example 2 of the paper by \citeN{FaberTW08}:
\[
  \begin{array}{l}
  \mbox{\tt c $\times$ a $\times$ b $\leftarrow$}\\
  \mbox{\tt c $\leftarrow$ a,b}\\
  \mbox{\tt d $\leftarrow$ c,not d}
  \end{array}
\]
and:
\[
  \begin{array}{l}
  \mbox{\tt c $\times$ b $\times$ a $\leftarrow$}\\
  \mbox{\tt c $\leftarrow$ a,b}\\
  \mbox{\tt d $\leftarrow$ c,not d}
  \end{array}
\]
It is intuitively clear that in the first program {\tt a} is preferred over {\tt b}, while in the
second program {\tt b} is preferred over {\tt a}. Despite this difference, the two programs are strongly
equivalent under the $\equiv_{s,\times}$ semantics of~\cite{FaberTW08}. Under our characterization the
two programs are not strongly equivalent. To see this, consider the interpretation
$I=\{({\tt a},T),({\tt b},F),({\tt c},F^*),({\tt d},F^*)\}$, which is a model of the first program
but not a model of the second. Therefore, the two programs are not logically equivalent in our four-valued
logic, and consequently they are not strongly equivalent in our setting.
\end{example}

Although our study does not cover all the notions of strong equivalence examined
in~\cite{FaberTW08}, we believe that it has important advantages. Our work characterizes
strong equivalence as logical equivalence in the four-valued logic of~\cite{iclp2021}.
This result extends in a smooth way the well-known characterization of strong equivalence
for normal logic programs~\cite{LifschitzPV01}. Notice that the corresponding
characterization of the inclusion preferred strong equivalence
in~\cite{FaberTW08}, is much more involved and uses certain binary
functions over the sets of models of the programs that rely on the syntax
of the given programs (see~\cite{FaberTW08}, Definition 8 and Theorem 19).
We believe that this is not an inherent shortcoming of the work of~\cite{FaberTW08},
but instead a possibly unavoidable consequence of the fact that the original
semantics of LPODs~\cite{lpod-brewka,lpod-BNS04} is not purely model theoretic.
The simplicity of our characterization makes us believe that it can be extended
to broader classes of programs, such as for example to LPODs with strong negation
and to disjunctive LPODs~\cite{iclp2021}.

One important aspect that we have not examined in this paper, is the possible
practical use of the proposed strong equivalence characterization. To our knowledge,
all major ASP systems are two-valued, and it is therefore a legitimate question of how
our four-valued framework can be embedded in such systems. We believe that a promising
direction for future work would be to define a notion of \emph{collapsed strong
equivalence} for LPODs:
\begin{definition}
Two LPODs $P_1$ and $P_2$ are termed \emph{collapsed strongly equivalent under
the most-preferred answer sets} if for every LPOD $P$, the most-preferred answer sets of
$P_1 \cup P$ and $P_2\cup P$ become identical when $F^*$ is collapsed to $F$.
\end{definition}
Notice that in the above definition we do not need to collapse $T^*$ to $T$
because, by Definition~\ref{answer-set}, answer sets do not contain the $T^*$
value. The logical characterization of collapsed strongly equivalent LPODs is
probably an interesting question that deserves further investigation.

Finally, a very interesting question raised by one of the reviewers, is whether
the techniques developed in the paper by \citeN{iclp2021}, can also be used to derive a novel
and simpler semantics for Qualitative Choice Logic (QCL)~\cite{qcl}. Notice that QCL has also recently been
investigated with respect to strong equivalence~\cite{Bernreiter0W21}, so the work
developed in the present paper may be also relevant in this more general context.

\bibliography{lpods}

\ifincludeappendix
\appendix
\clearpage
\section{Proofs of Theorem~\ref{all-answer-sets-theorem} and Theorem~\ref{backwards-theorem}}\label{appendix1}
This appendix contains the proofs of Theorems~\ref{all-answer-sets-theorem}
and~\ref{backwards-theorem} from Section~\ref{characterization}. 
\begin{retheorem}{all-answer-sets-theorem}
Two LPODs $P_1$, $P_2$ are strongly equivalent under all the answer sets
if and only if they are logically equivalent in four-valued logic.
\end{retheorem}
\begin{proof}
\noindent ($\Leftarrow$)
Assume that $P_1$ and $P_2$ are logically equivalent in four-valued logic. Then,
every four-valued model that satisfies one of them, also satisfies the other. This
means that for all programs $P$, $P_1 \cup P$ has the same models as
$P_2 \cup P$. But then, $P_1 \cup P$ has the same answer sets as $P_2 \cup P$
(because the answers sets of a program are the $\preceq$-minimal models
among all the models of the program). Therefore, $P_1 \cup P$ and $P_2 \cup P$
are strongly equivalent under all the answer sets.

\noindent ($\Rightarrow$)
Assume that $P_1$ and $P_2$ are strongly equivalent under all the answer sets.
Assume, for the sake of contradiction, that $P_1$ has a model $M$ which is not a model
of $P_2$. We will show that we can construct an interpretation $M'$ and a program $P$ such
that $M'$ is a $\preceq$-minimal model of one of $P_1 \cup P$ and $P_2 \cup P$
but not of the other, contradicting our assumption of strong equivalence under all the 
answer sets. The construction of $M'$ and the proof that $M'$ is a model of $P_1$, are
identical to the corresponding ones in the proof of
Theorem~\ref{inclusion-preference-theorem}. We distinguish two cases.

\smallskip\noindent
\underline{\emph{Case 1}}: $M'$ is not a model of $P_2$. We define exactly the
same program $P$ as in Case 1 of Theorem~\ref{inclusion-preference-theorem} and
we demonstrate, following the same steps, that $M'$ is a $\preceq$-minimal model
of $P_1 \cup P$. This contradicts our assumption of strong equivalence because
$M'$ is not even a model of $P_2 \cup P$ (since we have assumed that it is not a
model of $P_2$).

\smallskip\noindent
\underline{\emph{Case 2}}: $M'$ is a model of $P_2$. We define exactly the same
program $P$ as in Case 2 of Theorem~\ref{inclusion-preference-theorem} and we
demonstrate, following the same steps, that $M'$ is a $\preceq$-minimal model of
$P_2 \cup P$. We then show, following the same steps as in
the proof of Theorem~\ref{inclusion-preference-theorem}, that $M'$ is not a
$\preceq$-minimal model of $P_1 \cup P$. This contradicts our assumption of
strong equivalence under all answer sets.

In conclusion, $P_1$ and $P_2$ are logically equivalent.
\end{proof}
For the proof of Theorem~\ref{backwards-theorem} we will make use of the following 
lemma from the paper by \citeN{iclp2021}:
\begin{lemma}\label{answer-sets-coincide}
Let $P$ be a normal logic program. Then, the answer sets of $P$ (see Definition~\ref{answer-set})
coincide with the standard answer sets of $P$.
\end{lemma}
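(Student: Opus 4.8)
The plan is to set up an explicit bijection between the standard answer sets of $P$ (two-valued sets $S$ of atoms, characterized in the usual way as the least classical models of the Gelfond--Lifschitz reduct $P^S$) and the answer sets of $P$ in the sense of Definition~\ref{answer-set} (solid, $\preceq$-minimal four-valued models). To a set $S$ I would associate the interpretation $M_S$ with $M_S(A)=T$ for $A\in S$ and $M_S(A)=F$ otherwise; conversely, to an answer set $M$ I would associate $S=\{A \mid M(A)=T\}$. The substance of the argument is to show that, for normal programs, (i) every answer set is in fact two-valued over $\{F,T\}$, and (ii) such a two-valued model is $\preceq$-minimal exactly when the associated $S$ is the least model of $P^S$.

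First I would dispose of the value $F^*$. Let $M$ be any solid model and let $U=\{A \mid M(A)=F^*\}$; I claim the interpretation $M_0$ obtained from $M$ by resetting every atom of $U$ to $F$ is again a model. Lowering $F^*$ to $F$ never changes the value of a negative literal ($\pnot B$ is $T$ for both $F$ and $F^*$) and can only decrease the value of a body, so the only rules that could break are those whose head $C$ lies in $U$; but then $M(C)=F^*$ forces the body of that rule to be $\leq F^*$ under $M$, and a short case check (a body equals $F^*$ only if some positive body atom equals $F^*$, hence lies in $U$) shows the body drops to $F$ under $M_0$, so the rule survives. Since $M_0\preceq M$ strictly whenever $U\neq\emptyset$, $\preceq$-minimality forces $U=\emptyset$, so answer sets of normal programs use only $F$ and $T$.

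It then remains to compare, for a two-valued model $M$ with $S=\{A \mid M(A)=T\}$ (so that $S$ is a classical model of $P$), the $\preceq$-minimality of $M$ with stability of $S$. The key observation is that any model $N$ with $N\prec M$ uses only the values $\{F,T^*,T\}$ (these are the only values $\preceq T$, while the only value $\preceq F$ is $F$), is $F$ outside $S$, and is thus encoded by its ``there-world'' $R=\{A \mid N(A)\geq T^*\}$ and ``here-world'' $H=\{A \mid N(A)=T\}$ with $H\subseteq R\subseteq S$. I would record two facts, each proved directly from the four-valued clauses: $R$ is a classical model of $P$, and $H$ is a model of the reduct $P^{R}$. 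For the direction ``$S$ stable $\Rightarrow M$ minimal'' these facts are decisive: $R\subseteq S$ together with $R\models P^{S}$ (which holds because $R$ is a classical model and every rule of $P^S$ has its original negative body already false in $S\supseteq R$) and the leastness of $S$ force $R=S$; then $H\models P^{S}$ and leastness again give $S\subseteq H$, contradicting $H\subsetneq S$. For the converse ``$M$ minimal $\Rightarrow S$ stable'', if $S$ were not the least model $S'$ of $P^S$, I would take $S'\subsetneq S$ and exhibit the explicit competitor $N$ with there-world $S$ and here-world $S'$ (that is, $N$ equals $T$ on $S'$, $T^*$ on $S\setminus S'$, and $F$ elsewhere); verifying rule-by-rule that $S'\models P^S$ makes $N$ a model with $N\prec M$, contradicting minimality.

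Putting these together, $M\mapsto S$ and $S\mapsto M_S$ are mutually inverse and preserve the respective ``answer set'' properties, which is the claim. The main obstacle I anticipate is the competitor analysis of the previous paragraph, specifically the need to pin the there-world $R$ of an arbitrary smaller model down to $S$: unlike the equilibrium-logic setting, a $\preceq$-smaller model may lower a $T$ both to $T^*$ and all the way to $F$, so one must rule out competitors that shrink the there-world and not merely the here-world. This is exactly where the combination of ``$R$ is a classical model'' with the leastness of $S$ does the work. The elimination of $F^*$ in the second paragraph is the other delicate step, since it is what guarantees that the four-valued answer sets land in the two-valued world where the reduct argument applies.
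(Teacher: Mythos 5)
The paper offers no proof of this lemma to compare yours against: it is explicitly imported from the earlier paper that introduces the four-valued semantics (the text says ``we will make use of the following lemma from the paper by Charalambidis et al.''), so any correct argument here is new content. Judged on its own, your proof is correct and self-contained. The two-step decomposition is sound: the $F^*$-elimination step works exactly as you say (negative literals are unaffected when $F^*$ is lowered to $F$, bodies can only decrease, and a body equal to $F^*$ must contain a positive atom of $U$, so rules with heads in $U$ survive the reset), and it correctly combines with solidity to confine answer sets of normal programs to $\{F,T\}$. Your encoding of the competitors $N\prec M$ as pairs $H\subseteq R\subseteq S$ is also right, and it hinges on the correct observation that $F^*\not\preceq T$, so a $\preceq$-smaller model never introduces $F^*$ below a $T$; the two facts ($R$ is a classical model of $P$, and $H\models P^{R}$) then reduce $\preceq$-minimality to stability precisely as in the here-and-there correspondence. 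One half-line should be added: in the direction ``$M$ minimal $\Rightarrow$ $S$ stable'', verifying that your competitor $N$ (here-world $S'$, there-world $S$) is a four-valued model requires, for rules whose body evaluates to $T^*$ under $N$ (all positive body atoms in $S$, at least one outside $S'$, negative body atoms outside $S$), that $C\in S$; this uses the fact that $S$ itself is a classical model of $P$ (equivalently $S\models P^{S}$), which is immediate since $M$ is a model of $P$, but it is needed in addition to $S'\models P^{S}$, the only fact your sketch invokes there. With that detail supplied, the mutually inverse maps $M\mapsto S$ and $S\mapsto M_S$ give the claimed coincidence.
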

\begin{retheorem}{backwards-theorem}
Let $P_1$, $P_2$ be normal logic programs. Then, $P_1$ and $P_2$ are strongly
equivalent under the standard answer set semantics if and only if they have
the same three-valued models.
\end{retheorem}
\begin{proof}
\noindent ($\Leftarrow$)
Assume that $P_1$ and $P_2$ have the same three-valued models.
This means that for all programs $P$, $P_1 \cup P$ has the same three-valued models
as $P_2 \cup P$. Since $P_1 \cup P$ and $P_2 \cup P$ are normal programs,
by Lemma~\ref{answer-sets-coincide} the answer sets coincide with the standard
answer sets which are two-valued by definition and therefore the answer sets are
the $\preceq$-minimal models among the three-valued models of the program.
But then, $P_1 \cup P$ has the same answer sets (and the same standard answer sets)
as $P_2 \cup P$. Therefore, $P_1$ and $P_2$ are strongly equivalent under
the standard answer set semantics.

\noindent ($\Rightarrow$)
Assume that $P_1$ and $P_2$ are strongly equivalent under the standard answer
set semantics. Suppose that $P_1$ has a three-valued model $M$ which is not a
model of $P_2$. Without loss of generality, we may assume that $M(A) = F$, for
every atom $A \in \Sigma$ that does not occur in $P_1 \cup P_2$. We will show
that we can construct an three-valued interpretation $M'$ and a normal logic
program $P$ such that $M'$ is a standard answer set of one of $P_1 \cup P$ and
$P_2 \cup P$ but not of the other contradicting our assumption of strong
equivalence.

Let $M'$ be the two-valued interpretation defined as:
\[
  M'(A) = \begin{cases}
              T & M(A) \geq T^* \\
              F & \text{otherwise}
            \end{cases}
\]

We claim that $M'$ is a model of $P_1$. Since $P_1$ is a normal logic program
all rules are of the form $C \leftarrow A_1, \ldots, A_m, {\pnot B_1}, \ldots,
{\pnot B_k}$. If $M'(A_1, \ldots, A_m, {\pnot B_1}, \ldots, {\pnot B_k}) = F$
then the rule is trivially satisfied. If $M'(A_1, \ldots, A_m, {\pnot B_1},
\ldots, {\pnot B_k}) = T$ then it follows that $M(A_i) \geq T^*$ and $M(B_j)= F$
for every $A_i$ and $B_j$ in the body of the rule and $M(A_1, \ldots,
A_m, {\pnot B_1}, \ldots, {\pnot B_k}) \geq T^*$. Since $M$ is a model of $P_1$
it satisfies the rule and thus $M(C) \geq T^*$. By the construction of $M'$ it
follows that $M'(C) = T$ and consequently the rule is satisfied. Lastly, notice
that no other values are possible for the body of the rule and therefore we
conclude that $M'$ is a model of $P_1$.

We proceed by distinguishing two cases that depend on whether $M'$ is a model
of $P_2$ or not.

\smallskip\noindent
\underline{\emph{Case 1}}: $M'$ is not a model of $P_2$.
We take $P$ to be $\{ A \leftarrow | M'(A) = T \}$. It is easy to see that $M'$
is a model of $P$ and thus model of $P_1 \cup P$. We show that $M'$ is also a
$\preceq$-minimal model of $P_1 \cup P$ and since $P_1 \cup P$ is a normal logic
program $M'$ is also a standard answer set of $P_1 \cup P$. Let $N$ be a model
of $P_1 \cup P$  and $N \prec M'$. It must exist atom $A$ such that $N(A) \prec
M'(A)$. Since $M'$ assigns only values $T$ and $F$, it must be $N(A) = F$ and $M(A) = T$. 
But then, $N$ is not a model of $P$ because there is a rule $A \leftarrow$ in $P$
which leads to contradiction. Therefore, $M'$ is $\preceq$-minimal and a
standard answer set of $P_1 \cup P$. By our initial assumption, $M'$ is not a
model of $P_2$ and thus not a model of $P_2 \cup P$ which leads to the
contradiction that $P_1$ and $P_2$ are strongly equivalent.

\smallskip\noindent
\underline{\emph{Case 2}}: $M'$ is a model of $P_2$. Let $D$ be an atom in $\Sigma$
that does not occur in $P_1 \cup P_2$. Such atom always exists, since $\Sigma$
is countably infinite set and $P_1, P_2$ are finite; moreover, $M(D) = F$ by our
assumption about $M$.
We take $P$ to be
\[
  \begin{split}
  P = & \{ A \leftarrow\ \mid M(A) = T \}\, \cup \\
      & \{ B \leftarrow A \mid \text{$A \neq B$ and $M(A) = T^*$ and $M(B) = T^*$ }\}\, \cup \\
      & \{ D \leftarrow \pnot A \mid M(A) = T^* \}
  \end{split}
\]
It is easy to see that $M'$ satisfies every rule in $P$ and therefore is a model
of both $P_1 \cup P$ and $P_2 \cup P$. We show that $M'$ is a standard answer
set of $P_2 \cup P$ but not of $P_1 \cup P$.

We proceed by showing that $M'$ is a $\preceq$-minimal model of $P_2 \cup P$ and
therefore an answer set of $P_2 \cup P$ which by
Lemma~\ref{answer-sets-coincide} is also a standard answer set of $P_2 \cup P$.
Assume there exists a model $N$ of $P_2 \cup P$ such that $N \prec M'$.

We first show that there exists an atom $A$ such that $M(A) = T^*$ and $N(A) = T$.
Consider an arbitrary atom $C$. If $M(C) = T$ then it is also $N(C) = T$, because
$P$ contains $C \leftarrow$ and $N$ is a model of $P$. If $M(C) = F$ then, by the
construction of $M'$ it is $M'(C) = F$ and since $N \prec M'$ we get $N(C) = F$.
Therefore if $M(C) \neq T^*$ then $M(C) = N(C)$. There should be, however, an atom $A$
that occurs in $P_2$  such that $N(A) \neq M(A)$ because $N$ is a model of $P_2$
and $M$ is not. Obviously, for that atom it must be $M(A) = T^*$ and $N(A) \neq T^*$.
Notice that there exists a rule $D \leftarrow \pnot A$ in $P$ where $M(D) = F$ and
must be satisfied by $N$ since it is also a model of $P$. Since $M(D) = F$ implies
$N(D) = F$, the only possibility is $N(A) = T$.

We next show that there exists an atom $B$ such that $M(B) = N(B) = T^*$.
Since $N \prec M'$, there exists $B$ such that $N(B) \prec M'(B)$.  The last relation
immediately implies that $M'(B) \neq F$ and by the construction of $M'$, it is
$M'(B) \neq T^*$. Therefore, the only remaining value is $M'(B) = T$. For that atom,
it cannot be $M(B) = T$ because then it is also $N(B) = T$. It follows, by the
construction of $M'$ that $M(B) = T^*$. We claim that $N(B) = T^*$, that is, it cannot
be $N(B) = F$. Since $M(B) = T^*$ there exists a rule $D \leftarrow \pnot B$ where
$M(D) = F$. Since $M(D) = F$, it is also $N(D) = F$. If we assume that $N(B) = F$
then $N$ does not satisfy this rule which is a contradiction. Therefore, $N(B) = T^*$.

Since $M(A) = M(B) = T^*$ there exists a rule $B \leftarrow A$ in $P$ that is not
satisfied by $N$ because we showed that $N(B) = T^*$ and $N(A) = T$. Therefore, $N$
is not a model of $P_2 \cup P$ and $M'$ is $\preceq$-minimal model of $P_2 \cup P$.

In order to conclude the proof, it suffices to show that $M'$ is not a standard
answer set of $P_1 \cup P$. By the definition of $M'$, it is $M \preceq M'$. But
since $M'$ is a model of $P_2$ and $M$ is not, it must be $M' \neq M$ and thus
$M \prec M'$. $M$ also satisfies the rules of $P$ and therefore it is a model of
$P_1 \cup P$. We conclude that $M'$ is not $\preceq$-minimal model of $P_1 \cup
P$ and thus not a standard answer set of $P_1 \cup P$.
\end{proof}



\ifreview
\clearpage
\input{response}
\fi

\fi

\end{document}